\documentclass[conference,final]{IEEEtran}

\usepackage[utf8]{inputenc}
\usepackage[linesnumbered,lined, algoruled]{algorithm2e}
\usepackage{algorithmic}
\usepackage{amsmath}
\usepackage{amsthm}
\usepackage{amsfonts}
\usepackage{amssymb}
\usepackage{balance}
\usepackage{bm}
\usepackage{color,soul}
\usepackage[acronym,shortcuts]{glossaries}
\usepackage{graphicx}
\usepackage{graphics}
\makeglossaries

\newacronym{auc}{AUC}{area under the curve}
\newacronym{bs}{AP}{access point}
\newacronym{ce}{CE}{cross entropy}
\newacronym{cdf}{CDF}{cumulative distribution function}
\newacronym{fa}{FA}{false alarm}
\newacronym{kl}{K-L}{Kullback-Leibler}
\newacronym{ls}{LS}{least-squares}
\newacronym{llr}{LLR}{log likelihood-ratio}
\newacronym{los}{LOS}{line of sight}
\newacronym{lssvm}{LS-SVM}{least squares SVM}
\newacronym{md}{MD}{misdetection}
\newacronym{ml}{ML}{machine learning}
\newacronym{mlp}{MLP}{multy-layer perceptron}
\newacronym{mse}{MSE}{mean squared error}
\newacronym[\glslongpluralkey={neural networks}]{nn}{NN}{neural network}
\newacronym{np}{N-P}{Neyman-Pearson}
\newacronym{oclssvm}{OCLSSVM}{one-class least-square \ac{svm}}
\newacronym{pdf}{PDF}{probability density function}
\newacronym{pmd}{PMD}{probability mass distribution}
\newacronym{pso}{PSO}{particle swarm optimization}
\newacronym{rnn}{RNN}{replicator neural network}
\newacronym{roc}{ROC}{receiver operating characteristic}
\newacronym{roi}{ROI}{region of interest}
\newacronym{rss}{RSS}{received signal strength}
\newacronym[\glslongpluralkey={support vector machines}]{svm}{SVM}{support vector machine}
\newacronym{ue}{UE}{user equipment}
\newacronym{wsn}{WSN}{wireless sensor network}
\newacronym{irlv}{IRLV}{in-region location verification}

\newcommand{\cross}[2]{H_{#1}(#2)}
\newcommand{\hatcross}[2]{\hat{H}_{#1}(#2)}
\newcommand{\gy}{g(\bm a)}
\newcommand{\E}[2]{\mathbb{E}_{#1}\left[#2\right]}
\newcommand{\pr}[1]{\mathbb{P} \left[ #1 \right]}

\newtheorem{theorem}{Theorem}


\title{Location-Verification and Network Planning \\ via Machine Learning Approaches}
\author{Alessandro Brighente, Francesco Formaggio, Marco Centenaro, Giorgio Maria Di Nunzio, and    Stefano Tomasin \\ {\small Department of Information Engineering, University of Padova, via G. Gradenigo 6/B, Padova, Italy. first.lastname@dei.unipd.it} }
\date{today}

\begin{document}

\maketitle

\begin{abstract}
\Ac{irlv} in wireless networks is the problem of deciding if  \ac{ue}  is transmitting from inside or outside a specific physical region (e.g., a safe room). The decision process exploits the features of the channel between the \ac{ue} and a set of network \acp{bs}.  We propose a solution based on   \ac{ml} implemented by a \ac{nn} trained with the channel features (in particular, noisy attenuation values) collected by the \acp{bs} for various positions both inside and outside the specific region. The output is a decision on the  \ac{ue} position (inside or outside the region). By seeing  \ac{irlv}  as an hypothesis testing problem, we address the optimal positioning of the \acp{bs} for minimizing either the \ac{auc} of the \ac{roc} or  the \ac{ce} between the \ac{nn} output and ground truth (available during the training). In order to solve the minimization problem we propose a two-stage \ac{pso} algorithm. We show that for a long training and a \ac{nn} with enough neurons the proposed solution achieves the performance of the \ac{np} lemma.
\end{abstract}

\begin{IEEEkeywords}
Physical layer security, location verification, neural network, network planning.
\end{IEEEkeywords}
\glsresetall

\section{Introduction}

Applications using information on the user location are rapidly spreading, also to ensure that some services are obtained only in pre-determined areas. 
In order to establish the user position we can rely on the device itself,  requested to report the position provided by its GPS module. However, tempering with the GPS module  or its  software interface is relatively easy \cite{ceccato2018exploiting}. Thus, more reliable solutions must be explored. Location verification systems aim at verifying the position of devices  \cite{Zeng-survey, 8376254}, possibly using distance measures  obtained, for example, through the \ac{rss} at anchor nodes for signals transmitted by the terminal under verification. This problem is closely related to  {\em user authentication} at the physical layer, where wireless channel features are exploited to verify the sender of a message \cite{7270404}.

We focus here on  \ac{irlv},  the problem of deciding whether a message coming from a terminal over a wireless network has been originated from a specific physical region, e.g., a safe room, or not \cite{Zeng-survey}. \ac{irlv} can be seen as an hypothesis testing problem between two alternatives, namely being inside or outside the specific region. Among proposed solutions, we recall distance bounding techniques with rapid exchanges of packets between the verifier and the prover \cite{Brands}, also using radio-frequency and ultrasound signals \cite{Sastry}, and solutions based on anchor nodes and increasing transmit power by the sender \cite{Vora}. More recently, a delay-based verification technique has been proposed  in \cite{7145434}, leveraging geometric properties of triangles, which prevent an adversary from manipulating measured delays. 

In this paper, we consider the \ac{irlv} problem for a \ac{ue} connected to a set of network \acp{bs}. The decision on the user position is taken on the basis of observed features of the channel over which communication occurs. For exemplary purposes we focus here on the observation of the attenuation of the channels between the \ac{ue} and the \acp{bs}. We propose a \ac{ml} approach  where i)  channel measurements are collected by trusted nodes both inside and outside the \ac{roi}, ii)  a machine is trained to take decisions between the two hypotheses, iii) the machine is exploited to take decisions on the unknown \acp{ue} in the exploitation phase. \ac{ml} techniques have already found application in user authentication (see  \cite{xiao-2018} and references therein), however never in \ac{irlv}, to the best of authors' knowledge. The \ac{nn} training is based on the \ac{ce}, and, framing  \ac{irlv} problem into an hypothesis testing problem, we establish the optimality of this criterion according to the \ac{np} lemma, in asymptotic conditions. 

Then, we address the problem of optimum positioning of the \acp{bs} (network planning) for \ac{irlv}. Two metrics are considered for this optimization: a) the \ac{ce} of \ac{nn} training, and b) the \ac{auc} of the \ac{roc} of the hypothesis test. While \ac{ce} is directly related to the \ac{nn} training, the \ac{auc} is more connected to the final performance that we expect from \ac{irlv}. A limited number of neurons, as well as a limited size of the training set may provide different results for the two metrics. For the optimization of \acp{bs} position  we propose a two-stage \ac{pso} algorithm, minimizing either the \ac{ce} or the \ac{roc} \ac{auc}. Simulation results over channels with shadow fading show the merits of the proposed solution, and its effectiveness in providing reliable \ac{irlv}.

\section{System Model}\label{sec:sys model}

We consider a cellular system with $N_{\rm AP}$ \acp{bs} covering a region $\mathcal{A}$ over a plane. We propose a \ac{irlv} system able to determine if a \ac{ue} is transmitting from inside an {\em authorized} sub-region $\mathcal{A}_0 \subset \mathcal{A}$. The dependency on location of the \ac{ue}-\acp{bs} channels is exploited to distinguish between transmissions from inside and outside $\mathcal{A}_0$. Transmissions are assumed to be narrowband and the channel feature used for \ac{irlv} is its attenuation.

The \ac{irlv} procedure comprises two phases. In the first phase, named identification or training, a trusted \ac{ue} transmits a training signal (known at the \acs{bs}) from various points inside region $\mathcal{A}_0$.  The \acp{bs} estimate the channel attenuations and store them in association with $\mathcal{A}_0$. Some external authentication technique on the transmitted packet  ensures that the received signal upon which the attenuation is estimated is actually transmitted by the trusted \ac{ue}. Similarly, attenuation values are collected when the trusted \ac{ue} transmits  from the complementary region $\mathcal{A}_1$ and  stored by the \acp{bs} in association to $\mathcal{A}_1 = \mathcal{A} \setminus \mathcal{A}_0$. In the second  phase, named verification or exploitation, the \ac{ue} transmits a known training sequence from any point in $\mathcal{A}$ and the \ac{irlv} system must decide whether the \ac{ue} is in region $\mathcal{A}_0$ or $\mathcal{A}_1$.


\subsection{Channel Model}

Let $\bm{x}_{\rm ap}^{(n)} =(X_{\rm ap}^{(n)},Y_{\rm ap}^{(n)})$ be the position of the $n$-th \ac{bs}. For a \ac{ue} located at $\bm{x}_{\rm ue}=(X_u,Y_u)$, its distance from \ac{bs} $n$ is denoted as $L(\bm{x}_{\rm ue},\bm{x}_{\rm ap}^{(n)})$. We assume that  the \ac{ue} transmits with constant power so that  \ac{bs} $n$ can estimate the attenuation $a{(n)}$ incurred over the channel, including the effects of path-loss and shadowing. Let $\bm a = [a(1), a(2), \ldots, a(N_{\rm AP})]$ collect attenuation values from all \ac{bs}.

 Denoting the path-loss coefficient as $a_{\rm PL}{(n)}$  the shadowing component is log-normally distributed, i.e., $\left(a_{\rm S}{(n)}\right)_{\rm dB} \sim \mathcal{N}(0,\sigma_s^2)$, and we have
\begin{equation}
    \left(a{(n)}\right)_{\rm dB} = \left(a_{\rm PL}{(n)}\right)_{\rm dB} + \left(a_s{(n)}\right)_{\rm dB}.
\end{equation}
The channel model for path-loss and shadowing is derived from \cite{3gpp}. 
For a \ac{los} link the path loss coefficient in dB is modelled as
\begin{equation}\label{eq:los}
    a_{\rm PL-LOS}{(n)} = 20\log_{10}\left(\frac{f_0 4\pi L(\bm{x}_{\rm UE},\bm{x}_{\rm AP}^{(n)})}{c}\right),
\end{equation}
where $f_0$ is the carrier frequency, and $c$ is the speed of light. For a  non-\ac{los} link the path loss coefficient in dB is defined as
\begin{equation}
\begin{split}
    &a_{\rm PL-LOS}{(n)} = 40\left(1-4 \cdot 10^{-3} \left. h_{\rm AP} \right|_{\rm m} \right)  \times  \\
    &\log_{10}\left (\left.\frac{L(\bm{x}_{\rm UE},\bm{x}_{\rm AP}^{(n)})}{10^3}\right|_{\rm m}\right ) -18\log_{10}\left.h_{\rm AP}\right|_{\rm m} +\\
    &
    + 21\log_{10}\left(\frac{\left.f_0\right|_{\rm MHz}}{10^6}\right) + 80,
    \end{split}
\end{equation}
where $\left.h_{\rm AP}\right|_{\rm m}$ is the \ac{bs} antenna elevation in meters, $\left. f_0 \right|_{\rm MHz}$ is the carrier frequency in MHz, and $\left. L(\bm{x}_{\rm UE},\bm{x}_{\rm AP}^{(n)})\right|$ is the \ac{ue}-\ac{bs} $n$ distance in meters.
We assume that correlation between shadowing coefficients $\left( a_{\rm S}(i)\right)_{\rm dB}$ and $\left( a_{\rm S}(j)\right)_{\rm dB}$ for two \acp{bs} located in $\bm{x}_i$ and $\bm{x}_j$ when the \ac{ue} is transmitting is
\begin{equation}\label{eq: coor mat}
    \E{\bm x}{a_{\rm S}(i)a_{\rm S}(j)} = \sigma_s^2\exp \left({-\frac{L(\bm{x}_i,\bm{x}_j)}{d_c}}\right),
\end{equation}
where $d_c$ is the shadowing decorrelation distance and $\E{\bm x}{\cdot}$ is the expected value with respect to the distribution of $\bm x$.

\section{In-region Location Verification}\label{sec: ml}

Let us define the two hypotheses of the \ac{irlv} problem as
 \begin{equation}
  \mathcal H_0: \mbox{the \ac{ue} is in $\mathcal{A}_0$}, \quad  \mathcal H_1: \mbox{the \ac{ue} is in $\mathcal{A}_1$}.
\end{equation}
In the training phase the \ac{ue} transmits from $S$ locations. For transmission $i=1,\ldots,S$, let $\bm{a}^{(i)}=[a^{(i)}(1),\ldots,a^{(i)}(N_{\rm AP})]$ 
be the  vector of  measured attenuations. We associate to $\bm a ^{(i)}$ the label $t_i=0$ if \ac{ue} is transmitting from inside $\mathcal{A}_0$ (hypothesis $\mathcal{H}_0$), and $t_i=1$ otherwise. Let also define $\mathcal{T} =
 \{\bm{a}^{(1)}, \ldots , \bm{a}^{(S)} \}$.
By using these attenuation training vectors and labels, we aim at building a function
\begin{equation}
    \hat{t} = g(\bm{a}) \in \{0,1\}\,
\end{equation}
that maps any attenuation vector $\bm{a}$ into a decision on the location of the \ac{ue}. We would like to have $\hat{t}=0$ if $\bm a$ was obtained when the \ac{ue} was inside $\mathcal{A}_0$ and $\hat{t}=1$ otherwise.

The performance of the \ac{irlv} system is assessed in terms of two error probabilities: the \ac{fa} probability, i.e., the probability  that a \ac{ue} in $\mathcal A_0$ is declared outside this region, and the \ac{md} probability, i.e., the probability that a \ac{ue} outside $\mathcal A_0$ is declared inside the region. In formulas, denoting with $\pr{\cdot}$ the probability function,
\begin{equation}
P_{\rm FA} =\pr{\hat{t} =1 | \mathcal H_0}, \quad  
 P_{\rm MD}=\pr{\hat{t} = 0 | \mathcal H_1}. 
\end{equation}

\subsection{Test for Known Attenuation Statistics}

The \ac{irlv} problem can be seen as an hypothesis testing problem between the two hypotheses $\mathcal H_0$ and $\mathcal H_1$. When the statistics of the attenuation vectors are known under the two hypotheses, the most powerful test for the \ac{irlv} problem is provided by the \ac{np} lemma. In particular, let us  define the \ac{llr}
\begin{equation}\label{eq:lr}
    \mathcal{L}{(\bm a)}=\log\left(\frac{p_{\bm a}(\bm a|\mathcal{H}_0)}{p_{\bm a}(\bm a|\mathcal{H}_1)}\right)\,,
\end{equation}
where $p_{\bm a|\mathcal{H}}(\bm a|\mathcal{H}_i)$ is the \ac{pdf} of the random vector $\bm a$ modelling all attenuation values $\bm a$, given that hypothesis $\mathcal H_i$ is verified, and $\log$ denotes the base-2 logarithm. The \ac{np} test function is 
\begin{equation}
\label{eq:thrOpt}
    \hat{t} = g(\bm a) = \begin{cases}
    0 & \mathcal{L}{(\bm a)} \geq \theta\,, \\ 
    1 & \mathcal{L}{(\bm a)} < \theta\,, 
    \end{cases}
\end{equation}
where $\theta$ is a threshold to be chosen in order to ensure the desired \ac{fa} probability. This test ensures that for the given \ac{fa} probability the \ac{md} probability is minimized. 

\subsection{Example of \ac{np} Test}
\label{sec:los}
We now describe an example of application of the \ac{np} test, where we can easily obtain a close-form expression for $f(\bm a)$. 

Let us define the overall network region as a circle $\mathcal{A}_c$ with radius $R_{\rm out}$ and consider a single \ac{bs} located at the center of $\mathcal{A}_c$. Consider $\mathcal{A}_{0}$ as a rectangle with nearest point to the center of $\mathcal{A}_c$ at a distance $R_{\rm min}$. The outside region is $\mathcal{A}_1 = \mathcal{A}_c \setminus \mathcal{A}_0$. In the \ac{los} scenario the scalar attenuation $a$ incurred by a \ac{ue} is given by path loss, which only depends on its relative distance to the \ac{bs}. Considering an attenuation value $a$, the \ac{ue}-\ac{bs} distance  is given by
\begin{equation}
    R = \frac{c a}{ 4 \pi f_0}.
\end{equation}
Therefore, instead of considering $p_{a|\mathcal{H}}(a|\mathcal H_i)$ we consider $p_{R|\mathcal{H}}(r|\mathcal H_i)$, where distance $R$ corresponds to attenuation $a$. We first derive the \ac{cdf} of $R$ in $\mathcal{A}_0$, i.e.,  the probability that the \ac{ue} is located in $\mathcal{A}_0$ at a distance $R\le r$ from the \ac{bs}. This is
\begin{equation}\label{eq:cdf}
     \pr{R \le r|\mathcal{H}_0} = \frac{1}{|\mathcal{A}_0|}\int_{R_{\rm min}}^{r} \rho \alpha(\rho) d\rho,     
\end{equation}
where $\alpha(R)$ denotes the angle of the circular sector measured from a distance $R$ and intersecting region $\mathcal{A}_0$ and $|\mathcal{A}_0|$ is the area of region $\mathcal{A}_0$. Then by taking the derivative of the \ac{cdf} (\ref{eq:cdf}) with respect to $r$ we obtain the \ac{pdf} 
\begin{equation}\label{eq:num}
    p_{R|\mathcal{H}}(r|\mathcal{H}_0) = \frac{1}{|\mathcal{A}_0|}r\alpha(r).
\end{equation}
Following the same reasoning and considering that the length of the circular sector with radius $r$ located in $\mathcal{A}_1$ is $2\pi - \alpha(r)$, we obtain the \ac{pdf} of transmission from a distance $r$ in $\mathcal{A}_1$ as
\begin{equation}\label{eq:den}
     p_{R|\mathcal{H}}(r|\mathcal{H}_1) = \frac{1}{|\mathcal{A}_1|}r\left(2\pi-\alpha(r)\right).
\end{equation}
From (\ref{eq:num}) and (\ref{eq:den}) we obtain the \ac{llr} as a function of the \ac{ue}'s distance from the \ac{bs} as 
\begin{equation}
    \mathcal{L}{(a)}=\log\left[\frac{|\mathcal{A}_1|\alpha \left(\frac{c a}{f_0 4 \pi}\right)}{|\mathcal{A}_0|\left(2\pi-\alpha\left(\frac{c a}{f_0 4 \pi}\right)\right)}\right].
\end{equation}


\subsection{Neural Network Implementation}\label{sec:nn}

Under more complicated scenarios, it becomes hard to obtain close-form expressions for the \ac{llr}. Therefore, we reosrt to a \ac{ml} approach, using a \ac{nn}  trained with attenuation vectors $\bm{a}^{(i)}$ and labels $t_i \in \{0,1\}$. In the verification phase the trained \ac{nn} is used on the test attenuation vectors $\bm a$ to provide the decision $\hat{t} \in \{0,1\}$. Now, $g(\cdot)$ is the function implemented by the \ac{nn}.

We now provide a short description of a \ac{nn}. A feed-forward \ac{nn} processes the input in stages, named layers, where the output of one layer is the input of the next layer. The input of the \ac{nn} is $\bm{y}^{(0)} = \bm{a}$, and layer $\ell-1$ has $N^{(\ell-1)}$ outputs obtained by processing the inputs with $N^{(\ell-1)}$ scalar functions named neurons. The output of the $n$-th neuron of the $\ell$-th layer is
\begin{equation}\label{eq:nonLin}
y_n^{(\ell)} = \sigma\left( \bm{w}_n^{(\ell -1)}\bm{y}^{(\ell-1)}+b_n^{(\ell)} \right),
\end{equation}
where $\bm{w}_n^{(\ell -1)}$ and $b_n^{(\ell)}$ are coefficients to be determined in the training phase, and $\sigma(\cdot)$ is the sigmoid activation function. 
The last layer comprises only one neuron, $y^{(L)}$, and the final output of the \ac{nn} is the scalar 
\begin{equation}
	\tilde{t}(\bm a) \triangleq \sigma(y^{(L)}),	
\end{equation}
where $L$ is the total number of layers. Finally, the test function is obtained by thresholding $\tilde{t}(\bm{a})$, i.e.,
\begin{equation}
\label{eq:decNN}
    g(\bm{a}) = \begin{cases}
    1 & \tilde{t}(\bm a) > \lambda \\
    0 & \tilde{t}(\bm a) \leq \lambda.
    \end{cases}
\end{equation}
By varying $\lambda$ we obtain different values of $P_{\rm FA}$ and $P_{\rm MD}$ for this \ac{irlv} test.

Various options have been proposed in the literature for \ac{nn} training. We consider here as objective function the empirical \ac{ce} between the \ac{nn} output and the labels $t_i$, defined as
\begin{equation}\label{eq:ce}
\begin{split}
\hatcross{p_{\mathcal{H}|\bm a}}{g} \triangleq& -\frac{1}{S} \sum_{i=1}^{S}\left[t_i\log \tilde{t}(\bm a^{(i)}) + \right.\\
&\left. +\left(1-t_i\right)\log\left(1-\tilde{t}(\bm a^{(i)})\right)\right].
\end{split}
\end{equation}
Training is performed with the  gradient descent algorithm minimizing $\hatcross{p_{\mathcal{H}|\bm a}}{g}$.

In the following we show that a \ac{ce} based \ac{nn} is equivalent, in probability and for perfect training, to the \ac{np} solution. First, we prove that the output of the \ac{nn} can be interpreted as the class conditional probability. 
\begin{theorem}
Let $\gy \in [0,1]$ be the output of a \ac{nn} obtained with perfect training, i.e., with infinite number of training points, layers and neurons. Let the training be performed with the \ac{ce} metric. Then
\begin{equation}
	\gy = p_{\mathcal H|\bm a}(\mathcal{H}_0|\bm a)	
\end{equation}
almost surely.
\end{theorem}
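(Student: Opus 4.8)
The plan is to prove the claim in two movements. First, identify the function that minimizes the \emph{population} cross-entropy, i.e.\ the objective obtained in the limit of infinitely many training points. Second, use the ``perfect training'' hypothesis --- unbounded depth/width together with attainment of the global optimum --- to argue that the trained network realizes this function.

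For the first movement, let $S\to\infty$. For any fixed measurable candidate $g$, the summands in \eqref{eq:ce} are i.i.d.\ copies of $-[\,t\log\tilde t(\bm a) + (1-t)\log(1-\tilde t(\bm a))\,]$ with $(\bm a,t)$ drawn according to the joint law induced by the sampling of locations in $\mathcal A$; hence by the strong law of large numbers $\hatcross{p_{\mathcal H|\bm a}}{g}$ converges almost surely to the population cross-entropy
\begin{equation}\label{eq:popCE}
\cross{p_{\mathcal H|\bm a}}{g} = -\,\E{\bm a,\,t}{ t\log\tilde t(\bm a) + (1-t)\log\big(1-\tilde t(\bm a)\big)}.
\end{equation}
Conditioning on $\bm a$ and using the tower rule, \eqref{eq:popCE} equals $\E{\bm a}{\psi_{\bm a}\!\left(\tilde t(\bm a)\right)}$ with pointwise integrand $\psi_{\bm a}(v) = -[\,p_{\mathcal H|\bm a}(\mathcal H_1|\bm a)\log v + p_{\mathcal H|\bm a}(\mathcal H_0|\bm a)\log(1-v)\,]$, using the labelling $t=0\leftrightarrow\mathcal H_0$ of Section~\ref{sec: ml}.

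Next I would minimize pointwise. For every $\bm a$, $\psi_{\bm a}$ is strictly convex on $(0,1)$ and $\psi_{\bm a}'(v)=0$ has the unique solution $v^\star(\bm a)=p_{\mathcal H|\bm a}(\mathcal H_1|\bm a)$; equivalently, if $q_g(\cdot|\bm a)$ denotes the Bernoulli distribution with $q_g(\mathcal H_1|\bm a)=\tilde t(\bm a)$, then \eqref{eq:popCE} decomposes as $\E{\bm a}{D\big(p_{\mathcal H|\bm a}(\cdot|\bm a)\,\|\,q_g(\cdot|\bm a)\big)}$ plus the conditional entropy $\E{\bm a}{\mathbb H(\mathcal H\mid\bm a)}$, the latter independent of $g$. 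Since the \ac{kl} divergence is non-negative and vanishes only when the two distributions coincide, the population cross-entropy is minimized \emph{iff} $q_g(\cdot|\bm a)=p_{\mathcal H|\bm a}(\cdot|\bm a)$ for $\bm a$-almost every point; after matching the $t\leftrightarrow\mathcal H$ convention, this is the announced identity $\gy = p_{\mathcal H|\bm a}(\mathcal H_0|\bm a)$, holding almost surely.

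Finally, I would connect this abstract minimizer to the trained network. By the universal approximation property, with an unbounded number of layers and neurons the architecture of Section~\ref{sec:nn} approximates the (continuous) target uniformly on the compact region $\mathcal A$; combined with the ``perfect training'' assumption that gradient descent attains the infimum of the objective, and with the fact that $\cross{p_{\mathcal H|\bm a}}{g}$ is continuous in $g$ while its minimizer is essentially unique (strict convexity of each $\psi_{\bm a}$), a sequence of networks whose population cross-entropy tends to the minimum must converge to $p_{\mathcal H|\bm a}(\cdot|\bm a)$, e.g.\ in $L^2(\mathcal A)$ and hence almost surely along a subsequence. The main obstacle is precisely this last step: ``perfect training'' carries the real weight, because gradient descent on a non-convex network loss is not guaranteed to reach a global optimum and universal approximation only asserts the \emph{existence} of good approximants, not their recovery by the optimizer; the genuinely rigorous core of the argument is therefore the law of large numbers together with the pointwise Gibbs-type inequality, with density of the network class and ideality of the optimizer absorbed into the hypotheses.
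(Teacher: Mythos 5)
Your proposal is correct and follows essentially the same route as the paper's Appendix~A: pass to the population cross-entropy via the strong law of large numbers, condition on $\bm a$ using Bayes' rule, and identify the pointwise minimizer through the cross-entropy $=$ entropy $+$ \ac{kl}-divergence (Gibbs) argument, with the label/posterior convention handled as you note. The final step you flag as the real burden (expressivity plus attainment of the global optimum) is treated in the paper exactly as in your proposal, namely absorbed into the ``perfect training / infinite neurons'' hypothesis, which lets $p_\xi$ match $p_{\mathcal H|\bm a}$ and drive the divergence to zero.
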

\begin{proof}
See Appendix A.
\end{proof}

Note that this approach does not require the knowledge of the statistics of $\bm{a}$ under the two hypotheses, while, instead, it requires a large enough set of training points to converge.
However, at convergence, the \ac{nn} achieves the same performance of the NP approach. This holds since \eqref{eq:decNN} is equivalent to \eqref{eq:thrOpt}, i.e., they provide the same \ac{roc}, with
\begin{align}
	\label{eq:relation}
	\tilde{t} &= p_{\mathcal{H}|\bm a}(\mathcal{H}_0|\bm a), \quad 
	\theta = \frac{1-\lambda}{\lambda} \frac{p_\mathcal{H}(\mathcal{H}_0)}{p_{\mathcal{H}}(\mathcal{H}_1)},	
\end{align}  
where \eqref{eq:relation} is a direct consequence of the Bayes rule applied as follows 
\begin{equation}
	p_{\mathcal{H}|\bm a}(\mathcal{H}_0| \bm a) = \left[1+  \frac{p_{\mathcal{H}}(\mathcal{H}_1)}{p_{\mathcal{H}}(\mathcal{H}_0)} 2^{\mathcal{L}(\bm a)}\right]^{-1}.	
\end{equation}



\section{Network Planning}\label{sec:bsPos}

As the attenuation depends on the position of the \acp{bs} and on the surrounding environment, the performance of the authentication system depends on the number of \acp{bs} and on their location. In this Section, we derive an approach to optimally locate \acp{bs} (\emph{network planning}) so that the authentication system attains the best performance.

For  \acp{bs} positioning we consider as performance metric a suitable trade-off between \ac{fa} and \ac{md} probabilities. In particular, the \ac{roc} curve associates the $P_{\rm MD}$ with the corresponding $P_{\rm FA}$, for all possible values of thresholds $\lambda$. However, as we aim at using a single performance measure without setting a priori  $P_{\rm FA}$, we  resort to the \ac{roc} \ac{auc} \cite{hanley-82}, defined as 
\begin{equation}
    C(\{\bm{x}_{\rm AP}^{(n)}\})  = \int_{0}^{1} P_{\rm MD}\left(P_{\rm FA}\right) d P_{\rm FA},
    \label{defAUC}
\end{equation}
where $P_{\rm MD}\left(P_{\rm FA}\right)$ is the $P_{\rm MD}$ value as a function of $P_{\rm FA}$. In (\ref{defAUC}) we have highlighted the dependency of the AUC on the \ac{bs} positions. Note that  $C(\{\bm{x}_{\rm AP}^{(n)}\})  $ is  the integral of the \ac{roc} function. Therefore, the \ac{bs} position optimization aims at minimizing the \ac{auc}, i.e.
\begin{equation}
    {\rm argmin}_{\{\bm{x}_{\rm AP}^{(n)}\}} C(\{\bm{x}_{\rm AP}^{(n)}\}). 
    \label{defoptpos}
\end{equation}
Note that minimizing the \ac{auc} is equivalent to minimizing the average $P_{\rm MD}$ under the assumption of a  uniformly distributed $P_{\rm FA}$.
In practice, in order to compute the \ac{auc} we must run the \ac{nn} over the training set  multiple times, with different thresholds and find the corresponding \ac{roc} curve, before performing its integral by numerical methods.

We hence propose to exploit the training process of the \ac{nn} and use the \ac{ce}, readily provided at the end of training, as a proxy of the system performance, avoiding an explicit estimation of  \ac{roc}  \ac{auc}. This is also motivated by Theorem 1, as the lower \ac{ce}, the more a \ac{nn} approaches \ac{np}, which is the optimal solution. 
Recalling that the training minimization problem is non-convex, the same training set can lead to \acp{nn} with different classification performance and hence different \acp{auc}. However, we select the one minimizing the   \ac{ce}, which is expected to have the minimum \ac{auc}.

\subsection{Particle Swarm Optimization}

In order to solve the network planning problem (\ref{defoptpos}) we resort to the \ac{pso} method \cite{Kennedy-11}, which is an iterative algorithm performing the simultaneous optimization of different points. This is similar to the multi-start solution for non-convex optimization, where local minima are avoided by selecting among different descent paths the one providing the minimum solution. 

The \ac{pso} method is briefly recalled here. \ac{pso} is an iterative optimization algorithm based on social behavior of animals, e.g., birds flocking and fish schools. Consider $P$ particles, where  particle $p=1, \ldots P$, is described by a vector of \acp{bs} positions $\bm{x}_p = [\bm{x}_{\rm AP}^{(1)}(p),\ldots, \bm{x}_{\rm AP}^{(N_{\rm ap})}(p)]$, and by its velocity $\bm{v}_p$.  Each particle is a candidate solution of the optimization problem. Starting from particles at random positions and velocities, at each iteration both  positions  and  velocities   are updated. Two optimal values are defined in each iteration: the global optimum found so far in the entire particle population, and a local optimum for each particle, i.e., the optimal value found by the individual $p$ up to the current iteration. We define as $\bm{o}_{\rm G}$ the position of the the global optimal values and as $\bm{o}_p$ the position of the optimal value found by particle $p$ at the current iteration. The optimal values are those minimizing the selected objective function.

The position and velocity of the particles are updated at iteration $\ell$ as \cite{Kennedy-11}
   \begin{equation}\label{eq: v up}
\begin{split}
  \bm{v}_p(\ell) = \omega \bm{v}_p(\ell-1)+\phi_1(\ell)(\bm{o}_p(\ell-1)-\\
  -\bm{x}_p(\ell-1))+\phi_2(\ell)(\bm{o}_{\rm G}(\ell-1)-\bm{x}_p(\ell-1));
  \end{split}
  \end{equation}
  \begin{equation}\label{eq: p up}
  \bm{x}_p(\ell) = \bm{x}_p(\ell-1) + \bm{v}_p(\ell),
 \end{equation}
where $\omega$ is the inertia coefficient, and $\phi_1$ ($\phi_2$) is a random variable uniformly distributed in $[0,c_1]$ ($[0,c_2]$), where $c_1$ ($c_2$) is named the {\it acceleration constant}. The inertia coefficients and acceleration constants are  parameters to be properly chosen. 

\subsection{PSO-Based Network Planning}

As we have seen the \ac{roc} \ac{auc} well describes the overall behaviour of the \ac{roc} and is hence widely recognized as a valid synthetic metric for hypothesis testing. On the other hand, \ac{auc} computation  is complicated by the need of performing extensive testing, while the \ac{ce} is immediately provided after the \ac{nn} training process. 

In particular, the testing needed to compute \ac{auc} has an additional complexity (with respect to training that must be performed anyway), of 
\begin{equation}
    \mathcal{C}_{\rm test} = P \left( \mathcal{C}_{\rm out}+\mathcal{C}_{\rm ROC}+\mathcal{C}_{\rm AUC}\right),
\end{equation}
where  $\mathcal{C}_{\rm out}$ denotes the complexity associated to running the \ac{nn} on the test points, $\mathcal{C}_{\rm ROC}$ denotes the complexity of building the \ac{roc} function, and $\mathcal{C}_{\rm AUC}$ denotes the complexity of integrating the \ac{roc}. The \ac{nn} running cost $\mathcal{C}_{\rm out}$ is given by the total number of multiplications and additions needed to compute the output value $y^{(L-1)}$ for all testing vectors, i.e.,
\begin{equation}
    \mathcal{C}_{\rm out} = \left(2N_{\rm AP}N_{\rm h}+2N_{\rm h}^2N_{\rm L} + 2N_{\rm h}\right)\tau ,
\end{equation}
where $N_{\rm h}$ is the number of neurons in the hidden layer, $N_{\rm L}$ is the number of hidden layers, and $\tau$ is the size of the testing set.
The computation of the \ac{roc} curve requires the estimation of the $P_{\rm FA}$ and $P_{\rm MD}$ values for each threshold value $\lambda$, whereas the computation of the \ac{auc} requires the numerical integration of the \ac{roc} curve over $P_{\rm FA}$ values.

The proposed \ac{pso}-based network planning algorithm is reported in Algorithm 1. We denote as $\mathcal{B}$ the optimization metric and we initialize $P$ particles with random positions for each of the $N_{\rm AP}$ \acp{bs} in each particle. For each particle we train the \ac{nn} and compute $\mathcal{B}_p^{(0)}$. The global optimum value $\mathcal{B}_g$ is set to the minimum among all $\mathcal{B}_p^{(0)}$ values. Then, positions and velocities of the particles are updated via (\ref{eq: v up}) and (\ref{eq: p up}), and both the local and global optima are updated according to the obtained values at the current iteration. The algorithm stops when the global optimum converges.

 \begin{algorithm}[b!]

\small

  \KwData{ number of particles $P$, $N_{\rm AP}$}
  \KwResult{optimal position }
  Initialize particles\;
  train the \ac{nn} algorithm for each particle\;
  $\mathcal{B}_p^{(0)}$, $p=1,\ldots,N_p$\;
  $\mathcal{B}_g=\underset{p=1,\ldots,N_p}{\min} \, \mathcal{B}_p^{(0)}$\;
  $i = 0$\;

  \Repeat{convergence of $\mathcal{B}_g$}{
         $i = i + 1$\;
         \For{$p=1,\ldots,P$}{
         update velocity and position vector of particle via (\ref{eq: v up}) and (\ref{eq: p up})\;
                  train the \ac{nn} for each particle $\to$ $\mathcal{B}_p^{(i)}$\;
                  \If{$\mathcal{B}_p^{(i)} < \mathcal{B}_g$}{ $\mathcal{B}_g = \mathcal{B}_p^{(i)}$ \;}
         }
      
      }
      
\caption{Proposed \ac{ce}-based APs positioning algorithm.}
 \end{algorithm}      



      
    

Notice that, as the optimization problem is non-convex,  \ac{pso} is similar to a multi-start optimization with $P$ different starting points, which is a standard method used to avoid local minima. As  $P$ increases, the probability of finding only   a local solution is reduced.

\section{Numerical Results}\label{sec: nr}

The considered scenario is depicted in Fig. \ref{fig:5bs}. 
We consider a region  represented by a square with side length $525$~m, where four buildings with side length $255$~m are located at the map corners, and separated by a road with width $15$~m. The \ac{roi} $\mathcal{A}_0$ is located inside the lower-left building, delimited by the dash-dotted line. Roads are considered as \ac{los} paths, whereas transmissions from \acp{ue} located in any other map position are in non-\ac{los} conditions. $N_{\rm AP}=5$ \acp{bs} are deployed, one for each street and one at the map center, to collect attenuation values. Each \ac{bs} sees the \ac{los} path of the street it is located in. For each \ac{bs} we generate a shadowing map with standard deviation $\sigma_s = 8$ dB and decorrelation distance $d_c = 75$~m. The \ac{ue} transmits with average unitary power at the frequency $f_0 = 2.12$~GHz.

 \begin{figure}[h]
     \centering
     \includegraphics[width=0.7\columnwidth]{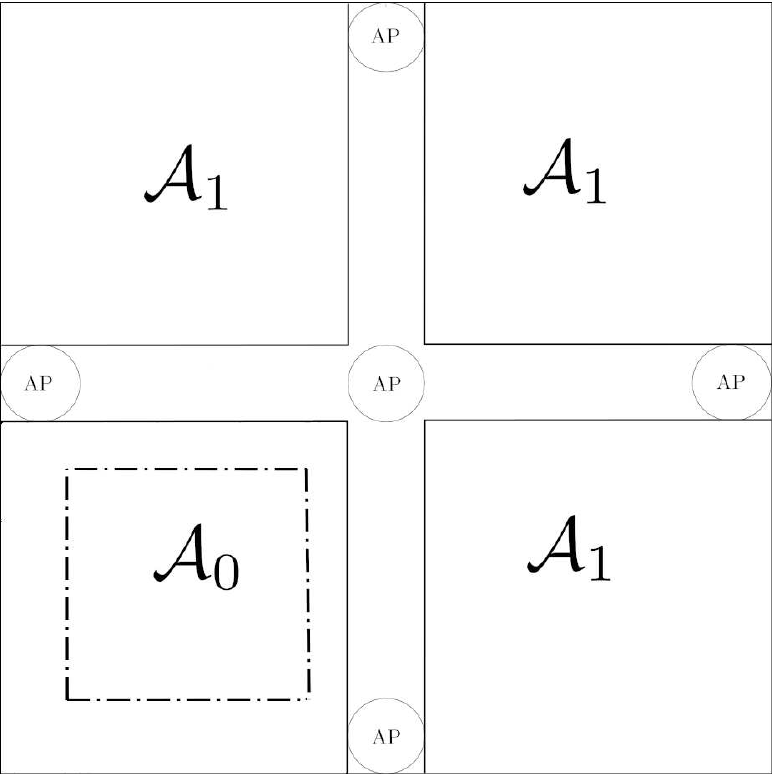}
     \caption{\ac{irlv} deployment scenario. $N_{\rm AP}=5$ \acp{bs} are located in the streets separating 4 buildings. The \ac{roi} is located inside the lower-left building, delimited by the dash-dotted line.}
     \label{fig:5bs}
 \end{figure}
 
Results are averaged over different shadowing realizations. In particular, for each \ac{fa} probability value we compute the average \ac{md} probability over different shadowing maps.

\subsection{In-region Location Verification Results}

Fig. \ref{fig:n_neur} shows the average (over shadowing realizations) $P_{\rm MD}$ vs. $P_{\rm FA}$ of the proposed \ac{nn} \ac{irlv} system,  with different numbers of neurons in the hidden layer $N_h$. Results have been obtained for a \ac{nn} with $N_L=3$ layers and with a training set of size $S= 10^5$. We notice that, as the number of neurons at the hidden layer increases, the average \ac{fa} probability decreases. When the number of neurons $N_h$ is higher than $8$ however we notice that results converge, meaning that increasing the network size does not lead to a performance improvement. Therefore, in the following we set $N_h=8$.

Fig. \ref{fig:n_train} shows the average (over shadowing realizations) $P_{\rm MD}$ vs. $P_{\rm FA}$ of the proposed \ac{nn} \ac{irlv} system  trained with different numbers of training points $S$. Results have been obtained for a \ac{nn} with  $L=3$ layers and $N_h=8$ neurons in the hidden layer. We see that the \ac{auc} decreases when increasing the number of training points and that, starting from $S=10^5$, the \ac{roc} does not significantly improve. This is due to the fact that, for the selected \ac{nn} architecture, training reaches convergence and hence adding further  training points does not improve the \ac{nn} performance. 

\begin{figure}[t]
    \centering
    \includegraphics[width=1\columnwidth]{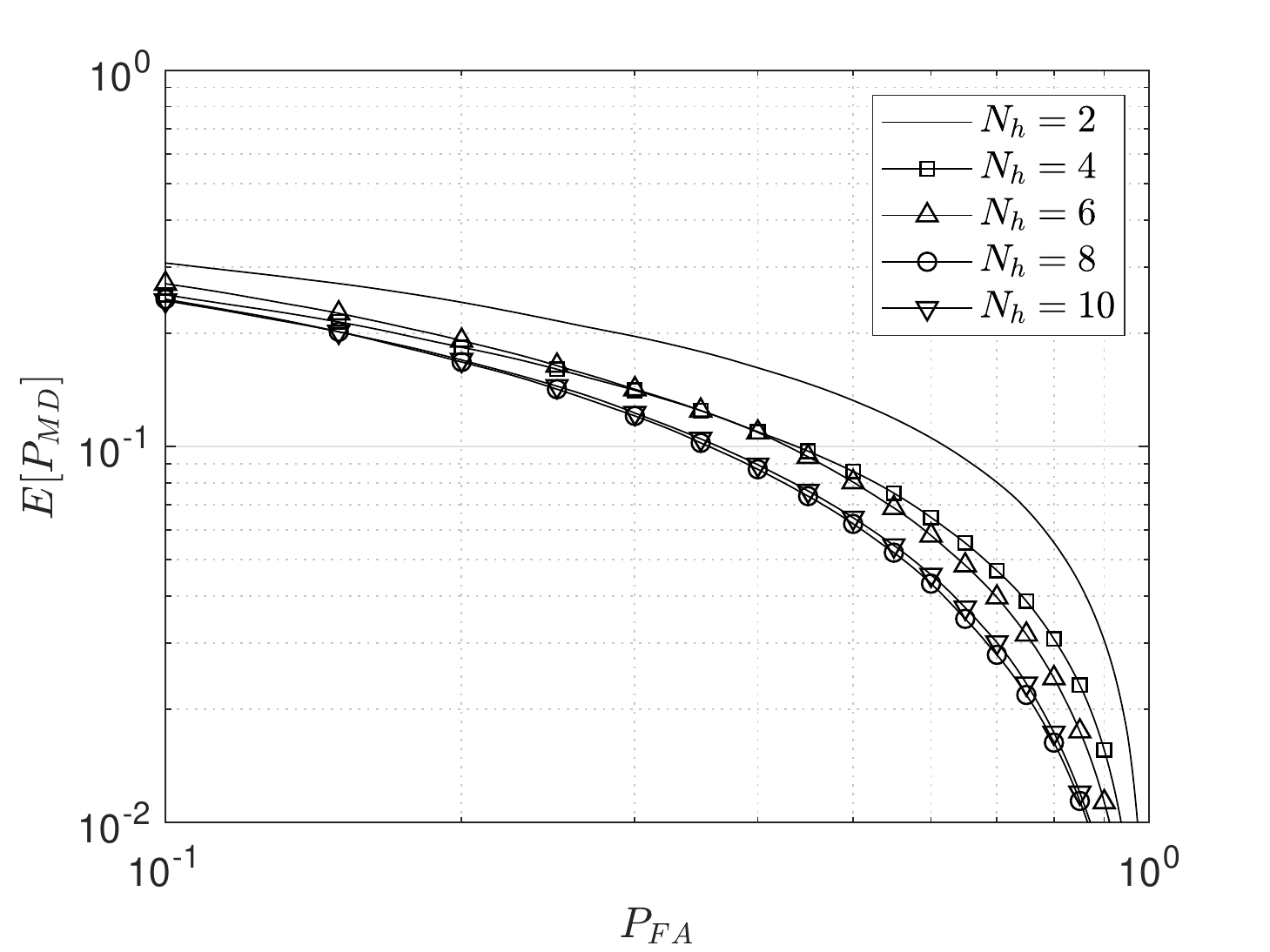}
    \caption{\ac{roc} of the \ac{nn} \ac{irlv} system with different numbers of neurons in the hidden layer $N_h$.}
    \label{fig:n_neur}
\end{figure}

\begin{figure}[t]
    \centering
    \includegraphics[width=1\columnwidth]{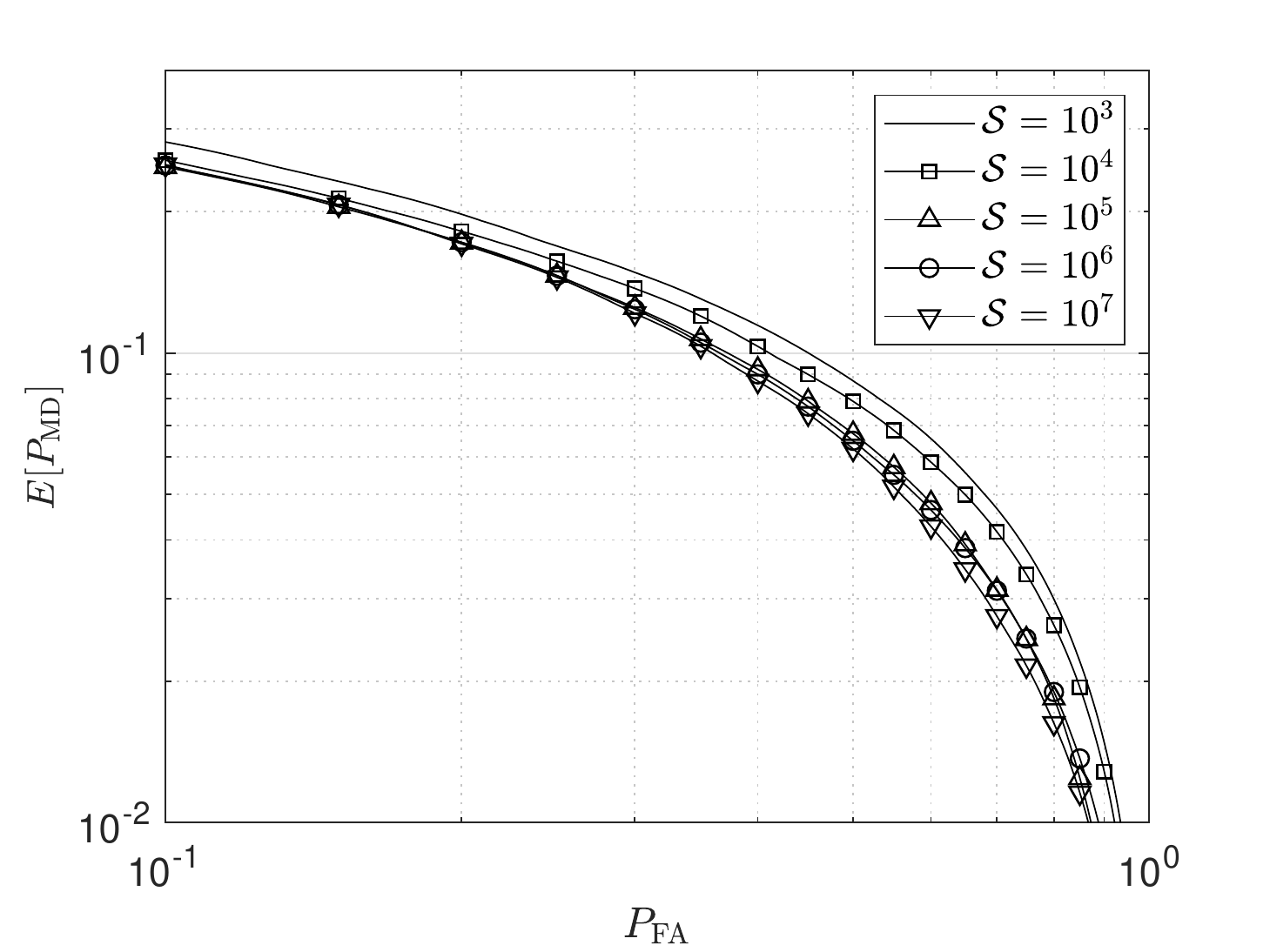}
    \caption{\ac{roc} of the \ac{nn} \ac{irlv} system trained with $S$ training points.}
    \label{fig:n_train}
\end{figure}

Fig. \ref{fig:NP_comp} shows part of the \ac{roc} obtained with the \ac{np} test and with the \ac{nn}, using  the model of Section \ref{sec:los}. In particular, we consider an overall circular region $\mathcal{A}$ with radius $R_{\rm out} = 40$~m, a square authentic region $\mathcal{A}_0$ of $L = H= 25$~m located inside $\mathcal{A}$, with upper left corner at a distance of $R_{\rm min} = 4$~m from the center of $\mathcal{A}$. We also report the results of \ac{np} theorem, that can be computed in close form for this simple scenario.  We see that, even with a small number of neurons, in this simple problem, the \ac{nn} achieves the same \ac{roc} of the \ac{np} test, thus confirming  Theorem 1.

 \begin{figure}[h]
     \centering
     \includegraphics[width=0.9\columnwidth]{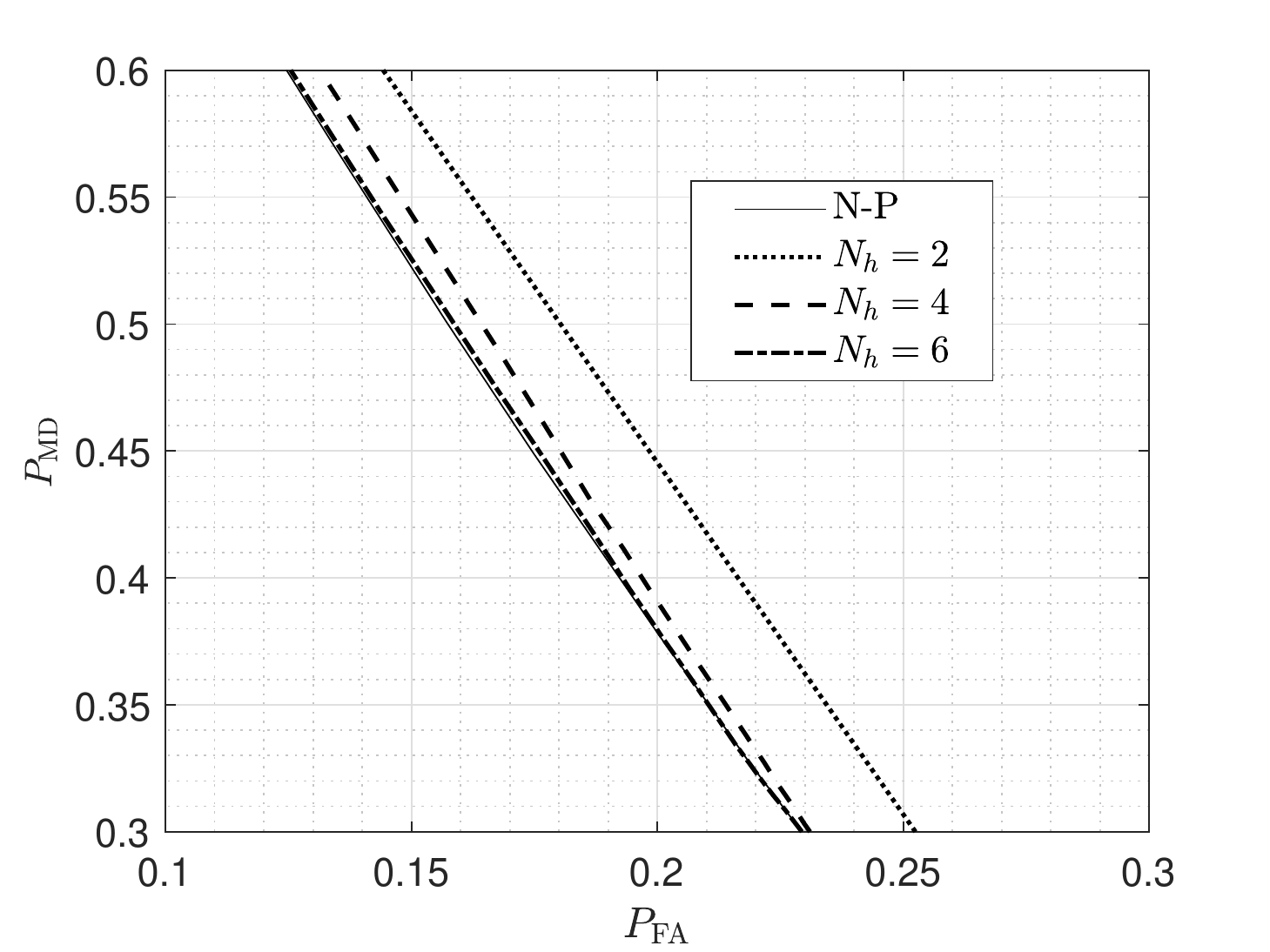}
     \caption{\ac{roc} of the \ac{np} test and the the proposed \ac{ml} test,  with different number of neurons in the hidden layer of the \ac{nn}.}
     \label{fig:NP_comp}
 \end{figure}

\subsection{Network Planning Performance}

We now consider the network planning problem, using the proposed Algorithm 1. We consider a \ac{pso} with $P=6$ particles, each composed by a set of $N_{\rm AP}=5$ \acp{bs} initialized with random positions. There exists a variety of implementations of the \ac{pso}, but the most general case for the parameter initialization is given by \cite{clerc2002}, where it is suggested to set $\omega=0.7298$, and $c_1=c_2=1.4961$. Results are averaged over different shadowing realizations. The used \acp{nn} are implemented with $L=3$ layers and $N_h=8$ neurons at the hidden layer. In order to validate the use of the \ac{ce} as a proxy for the \ac{auc}, we compare the performance of Algorithm 1 using either the CE or the AUC as objective function  $\mathcal{B}$. For a performance assessment we show  the \ac{auc} obtained with the various optimization approaches.

Fig. \ref{fig:CEvsAUC} shows the average \ac{auc}  vs. the number of \ac{pso} iterations for the two implementations of Algorithm 1, and for different training set size $S$.
We recall that the result of Theorem 1 holds asymptotically and with perfect training (in terms of training set, number of layers and neurons), which are conditions that our experiments do not satisfy. Two observations are in place here. 

First, the training set size should be sufficiently large so that \ac{ce} becomes a proxy of \ac{auc}: in fact we notice that for $S=10^3$, the  mean \ac{auc}  for \ac{pso} with \ac{ce} objective function increases with the  number of iterations. This is due to the fact that convergence has not been reached because of the limited size of the training set, and hence the obtained \ac{ce} value is not representative of the \ac{irlv} system performance. This is solved by augmenting the training set size, as we see when  $S=10^4$ or $10^5$. 

Second,  the \ac{auc}-based algorithm converges earlier than the \ac{ce}-based algorithm. As stated earlier, the \ac{nn} converges to the optimal \ac{np} solution only asymptotically. Therefore the obtained \ac{ce} is an approximation of the selected objective function, whose optimization does not necessarily entail the minimization of the \ac{auc}. Although requiring a larger number of iterations, the \ac{ce}-based solution is convenient as it does not require the explicit estimation of the \ac{roc} curve.   


\begin{figure} 
    \centering
    \includegraphics[width=1\columnwidth]{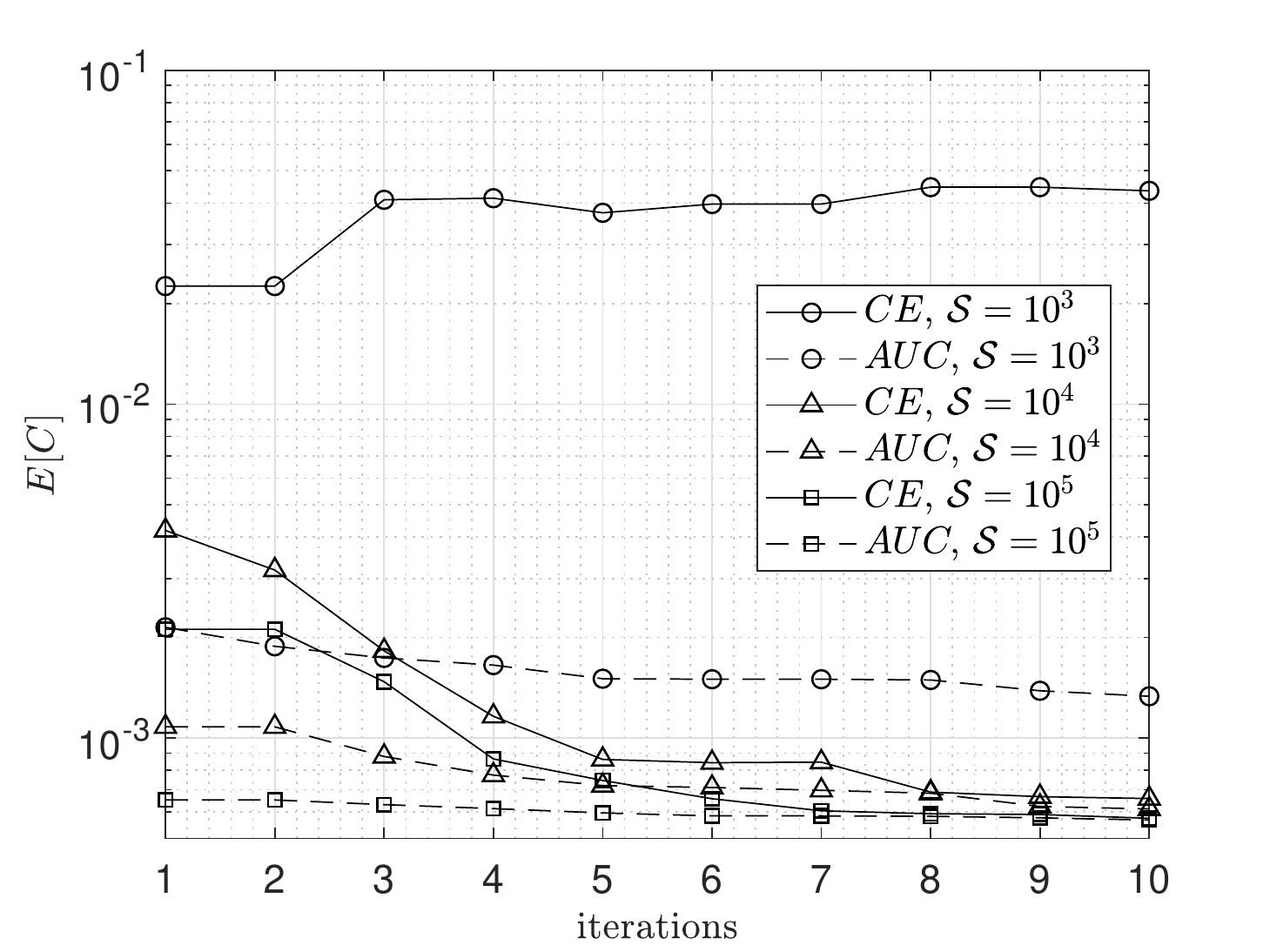}
    \caption{Mean \ac{auc} vs. the number of \ac{pso} iterations for Algorithm 1, and two \ac{pso} algorithms using only the  \ac{auc} and the \ac{ce} as objective functions. }
    \label{fig:CEvsAUC}
\end{figure}

 
\section{Conclusions}
\label{sec:conc}

In this paper we formulated the \ac{irlv} problem as an hypothesis testing problem and proposed a \ac{ml} solution. We proved that the \ac{nn} implementation achieves the same performance of the optimal \ac{np} test with \ac{ce} as training objective function, and verified numerically this claim for a simple scenario. We also assessed the effects of the training set size over the \ac{roc} for a more realistic scenario. We then proposed a \ac{pso} algorithm for   optimal \acp{bs} positioning, showing that there is a minimum training set size which allows to use the \ac{ce} as a proxy of the \ac{auc}.

\begin{appendices}
\section{Proof of Theorem 1}
Consider
\begin{equation}
\begin{split}
	\hatcross{ p_{\mathcal{H}|\bm a}}{g} =&  - \left[ \frac{n_0}{S} \frac{1}{n_0} \sum_{\bm a \in \mathcal{T}_0} \log(\gy) \right. \\
		&\left. + \frac{n_1}{S} \frac{1}{n_1} \sum_{\bm a \in \mathcal{T}_1} \log(1-\gy) \right],	
\end{split}
\end{equation}
where $\mathcal{T}_k = \{\bm a_i \in \mathcal{T} : t_i = k\}$, $|\mathcal{T}_k|=n_k$, $k \in \{0,1\}$.

\balance

Let $S \to \infty$. By the strong law of large numbers
\begin{equation}
\label{eq:as}
\begin{split}
	\lim_{S \to \infty}&	\hatcross{ p_{\mathcal{H}|\bm a}}{g} \simeq - \left[ p_{\mathcal H}(\mathcal{H}_0) \int_{\mathcal{Y}} p_{\bm a|\mathcal{H}}(\bm a|\mathcal{H}_0) \log (\gy) d\bm a \right. \\
	& \left. + p_{\mathcal{H}}(\mathcal{H}_1) \int_{\mathcal{Y}} p_{\bm a|\mathcal{H}}(\bm a|\mathcal{H}_1) \log (1-\gy) d\bm a \right],
\end{split}
\end{equation}
where equality $\simeq$ holds in probability, i.e., \textit{almost surely}, as per the strong law of large numbers.
Rearranging terms with the Bayes rule we get
\begin{equation}
\label{eq:dim1}
\begin{split}
\lim_{S \to \infty}	\hatcross{p_{\mathcal{H}|\bm a}}{g} \simeq - \left\{ \int_{\mathcal{Y}} \bigl[ p_{\mathcal{H}|\bm a}(\mathcal{H}_0|\bm a) \log \gy + \right. \bigr.\\ 
	\bigl. \bigl.(1-p_{\mathcal{H}|\bm a}(\mathcal{H}_0|\bm a)) \log(1-\gy)\bigr] p(\bm a)   d\bm a \biggr\}. 		
\end{split}
\end{equation}
By definition of expected value we can rewrite \eqref{eq:dim1} as
\begin{equation}
		\label{eq:dim2}
		\begin{split}
	\lim_{S \to \infty}	\hatcross{p_{\mathcal{H}|\bm a}}{g} \simeq - \mathbb{E}_{\bm a} \left[ p_{\mathcal{H}|\bm a}(\mathcal{H}_0|\bm a) \log \gy + \right.\\ 
			\left. (1-p_{\mathcal{H}|\bm a}(\mathcal{H}_0|\bm a)) \log(1-\gy)\right].  		
		\end{split}
\end{equation}
We introduce the Bernoulli random variable $\xi$ with alphabet $ \{0,1\}$ and \ac{pdf} 
\begin{equation}
\label{eq:q}
p_\xi(0) = \gy, \quad
p_\xi(1) = 1- \gy.
\end{equation}
Note that $p_\xi$ is a valid \ac{pdf} since it sums to 1 and $\gy \in [0,1]$ by hypothesis.
Recall now that the cross entropy between two discrete \acp{pdf} $p_{W_1}(w)$ and $p_{W_2}(w)$ having the same alphabet $\mathcal{W}$ is defined as
\begin{equation}
\label{eq:defCross}
	\cross{p_{W_1}}{W_2} = - \sum_{w \in \mathcal{W}} p_{W_1}(w) \log p_{W_2}(w),
\end{equation}
and an equivalent definition is
\begin{equation}
\label{eq:alternativeDef}
		\cross{p_{W_1}}{W_2} = H (p_{W_1}) + D(p_{W_1}||p_{W_2}),
\end{equation}
where $D(\cdot||\cdot)$ is the Kullback-Leibler (K-L) divergence and $H(\cdot)$ is the entropy function.
Then, form \eqref{eq:dim2} and \eqref{eq:defCross}, we have 
\begin{equation}
		\lim_{S \to \infty}	\hatcross{p_{\mathcal{H}|\bm a}}{g} \simeq \mathbb{E}_{\bm a} \left[ \cross{p_{\mathcal{H}|\bm a}}{\xi} \right],
\end{equation}
which from \eqref{eq:alternativeDef} yields
\begin{equation}
\label{eq:final}
		\lim_{S \to \infty}	\hatcross{p_{\mathcal{H}|\bm a}}{g} \simeq 
		\E{\bm a}{H ( p_{\mathcal{H}|\bm a}) + D( p_{\mathcal{H}|\bm a}|| p_\xi)}
\end{equation}

Recall that \ac{nn} training is performed by minimizing the left hand side of \eqref{eq:final} with respect to \ac{nn} parameters. In the right hand side of \eqref{eq:final} the only quantity depending on \ac{nn} parameters, through $\gy$ in \eqref{eq:q}, is $D( p_{\mathcal{H}|\bm a}||p_\xi)$. Then, with a infinite number of neurons (i.e., with the possibility of choosing any \ac{pdf} $p_\xi$, the minimum of the K-L divergence is  attained for $D(p_{\mathcal{H}|\bm a}||p_\xi)=0$, that is when $p_\xi(i) = p_{\mathcal{H}|\bm a}(i|\bm{a})$.
\end{appendices}

\bibliographystyle{plain}
\bibliography{bibliography2}

\end{document}